\newlength{\mytopmargin}
\newlength{\myleftmargin}
\newtheorem{theorem}{Theorem}
\newtheorem{proposition}[theorem]{Proposition}
\newtheorem{lemma}[theorem]{Lemma}
\begin{document}

\title{Spectral density asymptotics for Gaussian and Laguerre $\beta$-ensembles in the exponentially small region}
\author{Peter J. Forrester}
\date{}
\maketitle

\noindent
\thanks{\small Department of Mathematics and Statistics, 
The University of Melbourne,
Victoria 3010, Australia email: p.forrester@ms.unimelb.edu.au
}

\begin{abstract}
\noindent  The first two terms in the large $N$ asymptotic expansion of the $\beta$ moment of the characteristic polynomial for the
Gaussian and Laguerre $\beta$-ensembles are calculated. This is used to compute the asymptotic expansion of the spectral density
in these ensembles, in the exponentially small region outside the leading support, up to  terms $o(1)$ .
The leading form of the right tail of the distribution of the largest eigenvalue is given by the density in this regime. It is demonstrated that
there is a scaling from this, to the right tail asymptotics for the distribution of the largest eigenvalue at the soft edge.
\end{abstract}

\section{Introduction}

One of the best known results in random matrix theory is the Wigner semi-circle law (see e.g. \cite{Me91, Fo10}). In the case of the Gaussian orthogonal ensemble (${\rm GOE}_N$), which consists of $N\times N$ real symmetric matrices $H=(X+X^T)/2$ with the elements of $X$ independent standard Gaussians, this law corresponds to the limit theorem for the eigenvalue density
\begin{equation}\label{1.1}
\lim_{N\rightarrow \infty}\sqrt{\frac{2}{N}}\rho_{(1)}^{{\rm GOE}_N}(\sqrt{2N}X)=\left\{
\begin{array}{ll}
\frac{2}{\pi}(1-X^2)^{1/2}, & |X|<1\\
0, & |X|>1.
\end{array}
\right.
\end{equation}

Lesser known is the asymptotic form of $\rho_{(1)}^{{\rm GOE}_N}(\sqrt{2N}X)$ in the region $|X|>1$, where its scaled limit vanishes. Relatively recently \cite{Fy04} (see also \cite[Section 1.5]{Fo10}) this problem arose in the study of the expected number $\mathcal{N}$ of critical points for the random energy
$$
{\mathcal H} =\frac{\mu}{2}\sum_{j=1}^{N}x_j^2+V(x_1,\ldots,x_N)
$$
where $\mu>0$ and $V$ is Gaussian distributed with zero mean and covariance
\begin{equation}\label{2}
\langle V(\mathbf{x}_1)V(\mathbf{x}_2)\rangle =Nf\Big(\frac{1}{2N} (\mathbf{x}_1-\mathbf{x}_2)^2 \Big).
\end{equation}
The function $f$ in (\ref{2}) is arbitrary apart from the requirement that $J:=\sqrt{f''(0)}$ is well defined, and in fact $\mathcal{N}$ only depends on $f$ through $J$. Thus it was shown that $\mathcal{N}$ is equal to the product of terms smooth in $\mu/J$ times
$$
\int_{-\infty}^\infty e^{-N(t-(\mu/J)^2)/4}\rho_{(1)}^{{\rm GOE}_{N+1}}(\sqrt{N/2}((\mu/J)+t)) \,dt.
$$
This integral exhibits two distinct behaviours depending on whether $\mu/J<1$ or $\mu/J>1$. While knowledge of (\ref{1.1}) is sufficient to determine the former, to analyze the latter requires the $N\rightarrow \infty$ asymptotics in the region $s>1$. The density is then exponentially small, with its explicit form
derived in  \cite{Fy04} as
\begin{equation}\label{2.1}
\rho_{(1)}^{{\rm GOE}_{N+1}}(\sqrt{2N}s)\sim \exp\Big(-N(s\sqrt{s^2-1}+\log(s-\sqrt{s^2-1}))+{\rm O}(1) \Big).
\end{equation}

The result (\ref{2.1}) immediately raises two questions: that of deducing the explicit form of the ${\rm O}(1)$ term, and that of deriving the analogous asymptotics for other random matrix ensembles.
It is the aim of the present paper to answer these questions. Motivation for pursuing this line of research comes from the relevance of (\ref{2.1}) to the right tail large deviation form of the probability density function (PDF) $p_N^{{\rm GOE}}(s)$ for the largest eigenvalue in ${\rm GOE}_{N}$.

Thus, according to \cite[eq. (14.136)]{Fo10} we have that for $s>1$
\begin{equation}\label{3.2}
p_N^{{\rm GOE}}(\sqrt{2N}s)\mathop{\sim}_{N\rightarrow \infty}\rho_{(1)}^{{\rm GOE}}(\sqrt{2N}s).
\end{equation}
To derive (\ref{3.2}) we first note that
\begin{equation}\label{3.0}
p_{N}^{{\rm GOE}}(s)=-\frac{d}{ds}E_N^{{\rm GOE}}(0;(s,\infty))
\end{equation}
where $E_N^{{\rm GOE}}(0;(s,\infty))$ is the gap probability of no eigenvalues in the interval $(s,\infty)$ of the ${\rm GOE}$. But \cite[eq. (8.73)]{Fo10} the gap probability can be expressed as a series in terms of the $k$-point correlation functions $k=1,2,\dots$  according to
\begin{align}
E_N^{{\rm GOE}}(0;(\sqrt{2N}s,\infty))=&1-\sqrt{2N}\int_{s}^\infty \rho_{(1)}^{{\rm GOE}}(\sqrt{2N}t)dt \notag\\
&+\frac{(\sqrt{2N})^2}{2!}\int_{s}^\infty dt_1 \int_{s}^\infty dt_2  \, \rho_{(2)}^{{\rm GOE}}(\sqrt{2N}t_1,\sqrt{2N}t_2)+\ldots \label{3.1}
\end{align}
Assuming now that $s>1$ and noting that by asymptotic independence we must have, for $t_1,t_2,\ldots, t_k$ distinct,
\begin{equation}\label{3.1e}
\rho_{(k)}^{{\rm GOE}}(\sqrt{2N}t_1,\ldots, \sqrt{2N}t_k)\sim\ \prod_{l=1}^k \rho_{(1)}^{{\rm GOE}}(\sqrt{2N}t_l),
\end{equation}
the asymptotic form (\ref{2.1}) tells us that successive terms in (\ref{3.1}) are exponentially smaller that previous terms. Consequently, after differentiating as required by (\ref{3.0}), (\ref{3.2}) follows. Substituting (\ref{2.1}) in (\ref{3.2}) gives the leading large $N$ form of $p_N^{{\rm GOE}}(\sqrt{2N}s)$ for $s>1$. The latter was first obtained in \cite{MV09} via a Coulomb gas analysis. As an application of our determination of the analogue of (\ref{2.1}), and its extension by the explicit determination of the ${\rm O}(1)$ term, for other random matrix ensembles we can use the analogue of (\ref{3.2}) to deduce a corresponding large deviations asymptotic form of the right tail of the distribution of the largest eigenvalue in the ensemble.

The ensembles to be considered are the Gaussian and Laguerre $\beta$-ensembles. They can be constructed out of certain tridiagonal and bidiagonal random matrices with independently distributed elements \cite{ES06}. These matrices in turn, for the special $\beta$ values $1,2$ and $4$, are Householder similarity reductions of the Gaussian matrices used to construct the classical Gaussian and Laguerre orthogonal, unitary and symplectic ensembles. For present purposes the $\beta$-ensembles can be defined by their eigenvalue PDF. Generally, let us denote by ${\rm ME}_{\beta,N}(w)$ the eigenvalue PDF proportional to
\begin{equation}\label{3.a}
\prod_{l=1}^N w(\lambda_l)\prod_{1\leq j<k\leq N}|\lambda_k-\lambda_j|^\beta.
\end{equation}
Then the Gaussian and Laguerre $\beta$-ensembles correspond to the eigenvalue PDFs
$$
{\rm ME}_{\beta,N}(e^{-\beta\lambda^2/2}) \text{ and } {\rm ME}_{\beta,N}(\lambda^{\beta a /2}e^{-\lambda}\chi_{\lambda>0})
$$
respectively, where $\chi_A=1$ for $A$ true and $\chi_A=0$ otherwise.

Our aim is to compute the leading two terms in the asymptotic expansion of the density for the Gaussian and Laguerre $\beta$-ensembles, in the large deviation regime outside the leading order support. This will be done by expressing the density in terms of a particular moment of the characteristic polynomial. The latter can be interpreted in terms of the characteristic function of a linear statistic, for which there is a known asymptotic formula. This then leads to the sought asymptotic expansion of the density.

In Section \ref{Section2} we give the details of the relation between the density and a moment of the corresponding characteristic polynomial, and we revise how this relates to the characteristic function of a linear statistic. The large $N$ form of a large class of linear statistics for PDFs ${\rm ME}_{\beta, N}(w)$ is expected to exhibit a Gaussian form, with explicit integral expressions for the mean and variance. The latter are evaluated in Sections \ref{Section3} and \ref{Section4} for the Gaussian and Laguerre $\beta$-ensembles respectively, and the leading two terms in the asymptotic expansion of the corresponding densities outside the leading order support are given in those sections also. A discussion of consequences of these results with regards to the asymptotic form of the right tail of the distribution of the largest eigenvalue is given in Section \ref{Section5}, as is a possible experimental realization in the case $\beta = 2$.

Although the Gaussian expression for the asymptotic form of the characteristic function is well founded, its rigorous proof in cases
corresponding to the moments of characteristic polynomials is known only for ensembles with two soft edge boundaries,
or a hard edge with parameter $a=0$
\cite{BG11}. This covers the Gaussian $\beta$-ensemble, but only the $a=0$ Laguerre $\beta$-ensemble.
On the other hand, it is known \cite{BF97a} that for $\beta$ even these moments can be expressed as $\beta$-dimensional integrals. In the Appendix we show that the rigorous large $N$ expansion of the latter in both the Gaussian and Laguerre cases agrees with our earlier working.

\section{The density as a moment of the characteristic polynomial}\label{Section2}
\setcounter{equation}{0}
\subsection{Exact expressions}

For the matrix ensemble ${\rm ME}_{\beta,N}(w)$ the eigenvalue density $\rho_{(1),n}(x)$ is given in terms of a multiple integral by
\begin{equation}\label{6.1}
\rho_{(1),n}(x)=\frac{N}{C_{\beta,N}[w]}w(x)\int_{-\infty}^\infty d \lambda_2  \, w(\lambda_2)\ldots \int_{-\infty}^\infty d\lambda_N
\, w(\lambda_N)\prod_{1\leq j<k\leq N}|\lambda_k-\lambda_j|^\beta,
\end{equation}
where
$$
C_{\beta,N}[w]=\int_{-\infty}^\infty d\lambda_1 \, w(\lambda_1)\ldots \int_{-\infty}^\infty d\lambda_N \, w(\lambda_N)\prod_{1\leq j<k\leq N}|\lambda_k-\lambda_j|^\beta.
$$
Defining for general $g$ the canonical average with respect to ${\rm ME}_{\beta,N}(w)$ by the multiple integral
\begin{align}
\langle g(\lambda_1,\ldots, \lambda_N) \rangle_{{\rm ME}_{\beta,N}(w)}:=&\frac{1}{C_{\beta,N}[w]}\int_{-\infty}^\infty \, d\lambda_1 w(\lambda_1)\ldots \int_{-\infty}^\infty d\lambda_N \, w(\lambda_N) \notag \\
&\times \prod_{1\leq j<k\leq N}|\lambda_k-\lambda_j|^\beta g(\lambda_1,\ldots, \lambda_N),\label{6.2}
\end{align}
we have that
\begin{equation}\label{6.3}
\Big\langle \prod_{l=1}^N |x-\lambda_l|^\mu\Big\rangle_{{\rm ME}_{\beta,N}(w)}
\end{equation}
corresponds to the $\mu$-th moment of the modulus of the characteristic polynomial of the matrix ensemble.

It is immediate from the definitions (\ref{6.1}) and (\ref{6.2}) that the eigenvalue density can be written in terms of a moment of the characteristic polynomial,
\begin{equation}\label{d2}
\rho_{(1),N+1}(x)=\frac{(N+1)C_{\beta,N}[w]}{C_{\beta,N+1}[w]}w(x)\Big\langle \prod_{l=1}^N|x-\lambda_l|^\beta \Big\rangle_{{\rm ME}_{\beta,N}(w)}.
\end{equation}
The corresponding inter-relation in the case of the real Ginibre matrices ($N\times N$ matrices with entries independent standard Gaussians) was noted by Edelman, Kostlan and Shub \cite{EKS94}, and applied for purposes of computing the density and expected number of real eigenvalues.

For the Gaussian and Laguerre $\beta$-ensembles we want to scale the eigenvalues so that for $N\rightarrow \infty$ the support of the density is a finite interval. For this purpose let $M$ be such that $M/N\rightarrow 1$. Then it is well known \cite{Fo93a} that for the Gaussian $\beta$-ensemble with $\lambda_l\mapsto \sqrt{2M}\lambda_l$, and the Laguerre $\beta$-ensemble with $\lambda_l\mapsto 4M\lambda_l$, the $N\rightarrow \infty$ support is the intervals $(-1,1)$ and $(0,1)$ respectively. In these scaled coordinates $|\lambda|>1$ will then correspond to an exponentially small density for large $N$.

Introducing these scalings, for the Gaussian $\beta$-ensemble we then have
\begin{equation}\label{d.3a}
\sqrt{2M}\rho_{(1),N+1}^{\rm G}(\sqrt{2M}x)=A^{\rm G}e^{-\beta Mx^2}\Big\langle \prod_{l=1}^N|x-\lambda_l|^\beta \Big\rangle_{{\rm ME}_{\beta,N}(e^{-\beta M \lambda^2})}
\end{equation}
where
$$
A^{\rm G}=(N+1)(2M)^{(N\beta+1)/2}\frac{C_{\beta,N}[e^{-\beta \lambda^2/2}]}{C_{\beta,N+1}[e^{-\beta \lambda^2/2}]},
$$
while for the Laguerre $\beta$-ensemble
\begin{equation}\label{d.3b}
4M\rho_{(1),N+1}^{\rm L}(4Mx)=A^{\rm L}x^{a\beta/2}e^{-2M\beta x}\Big\langle \prod_{l=1}^N|x-\lambda_l|^\beta \Big\rangle_{{\rm ME}_{\beta,N}(\lambda^{a\beta/2}e^{-2\beta M \lambda})}
\end{equation}
where
$$
A^{\rm L}=(N+1)(4M)^{N\beta+1+a\beta/2}\frac{C_{\beta,N}[\lambda^{a\beta/2}e^{-\beta \lambda/2}]}{C_{\beta,N+1}[\lambda^{a\beta/2}e^{-\beta \lambda/2}]}.
$$
The normalizations in (\ref{d.3a}), (\ref{d.3b}) are particular Selberg integrals (see \cite[Ch. 4]{Fo10}) and as such have evaluations in terms of products of gamma functions. This provides us with the explicit expressions
\begin{align}
A^{\rm G}&=\frac{(N+1)}{(2\pi)^{1/2}}(2M\beta)^{(N\beta+1)/2}\frac{\Gamma(1+\beta/2)}{\Gamma(1+(N+1)\beta/2)}, \label{9a}\\
A^{\rm L}&=(N+1)(2M\beta)^{N\beta+1+a \beta/2}\frac{\Gamma(1+\beta/2)}{\Gamma(1+(N+1)\beta/2)\Gamma(a\beta/2+1+N\beta/2)}. \label{9b}
\end{align}

For future purposes we note that in the case $M=N+1$ Stirling's formula applied to (\ref{9a}) and (\ref{9b}) shows
\begin{align}
A^{\rm G} &\sim \frac{N}{\pi}(N\beta)^{-\beta/2}2^{(N+1/2)\beta}e^{(N+1)\beta/2}\Gamma(1+\beta/2),  \label{9ax}\\
A^{\rm L} &\sim \frac{N}{\pi}(N\beta)^{-\beta/2}2^{2 N \beta + 1 + \beta/2 + a \beta}e^{(N+1)\beta }\Gamma(1+\beta/2).  \label{9ag}
\end{align}

\subsection{Asymptotic form of the averages}

Consider the linear statistic
\begin{equation}\label{V1}
V(x)= \sum_{l=1}^N\log |x-\lambda_l|,
\end{equation}
where $\lambda_l$ are chosen with PDF (\ref{3.1}). The distribution of this linear statistic is given by the canonical average
$$
P_V(t)=\Big\langle \delta(t-\sum_{l=1}^N\log|x-\lambda_l|) \Big\rangle_{{\rm ME}_{\beta,N}(w)}.
$$
Taking the Fourier transform of both sides gives the corresponding characteristic function
\begin{equation}\label{7.e}
\widehat{P}_V(k) = \Big\langle e^{ik\sum_{l=1}^N\log|x-\lambda_l|}\Big\rangle_{{\rm ME}_{\beta,N}(w)}.
\end{equation}
Observing 
$$
\Big\langle \prod_{l=1}^N |x-\lambda_l|^\beta \Big\rangle_{{\rm ME}_{\beta,N}(w)}=\Big\langle e^{\beta\sum_{l=1}^N \log|x-\lambda_l|} \Big\rangle_{{\rm ME}_{\beta,N}(w)}
$$
we thus have
\begin{equation}\label{11.1}
\Big\langle \prod_{l=1}^N|x-\lambda_l|^\beta \Big\rangle_{{\rm ME}_{\beta,N}(w)}=\widehat{P}_V(-i \beta).
\end{equation}

More generally let $G:=\sum_{j=1}^N g(\lambda_j)$ denote an arbitrary linear statistic.
The characteristic function is then
\begin{equation}\label{2.13a}
\widehat{P}_G(k) =\Big\langle \prod_{l=1}^N e^{ik \sum_{j=1}^N g(\lambda_j)} \Big\rangle_{{\rm ME}_{\beta,N}(w)}.
\end{equation}
The significance of the relation (\ref{11.1}) is that
for a large class weights $w$ and statistics $g$, (\ref{2.13a}) has a known asymptotic form.
Thus suppose that for large $N$ the eigenvalue support of ${\rm ME}_{\beta,N}(g)$ is the finite interval $[a,b]$. Suppose too that $g(\lambda)$ and its derivative are bounded on $[a,b]$ or more generally that (\ref{m2}) below is finite. In these circumstances there are well founded physical grounds (see e.g. \cite[Section 14.4.1]{Fo10}) to expect that $\widehat{P}_G(k)$ is a Gaussian,
\begin{equation}\label{ea2}
\widehat{P}_G(k)\mathop{\sim}_{N\rightarrow \infty} e^{ik \mu_N(g)}e^{-k^2(\sigma(g))^2/2},
\end{equation}
where the mean $\mu_N(g)$ and variance $(\sigma(g))^2$ have the explicit forms
\begin{align}
\mu_N(g)=&\int_a^b\rho_{(1),N}(t)g(t) \, dt \label{m1}\\
(\sigma(g))^2 =&\frac{1}{\beta \pi^2}\int^b_a dt_1 \, \frac{g(t_1)}{((b-t_1)(t_1-a))^{1/2}}\int_a^b dt_2 \, \frac{g'(t_2)((b-t_2)(t_2-a))^{1/2}}{t_1-t_2}\notag \\
=&\frac{1}{2\beta}\sum_{k=1}^\infty ka_k^2,\hspace{1cm}a_k:=\frac{2}{\pi}\int_0^\pi g\Big(\frac{a+b}{2}+\frac{b-a}{2}\cos\theta \Big)\cos k\theta \, d\theta \label{m2}
\end{align}
(see also \cite{FF03}). At a rigorous level, for a class of weights $w(x)$ including the Gaussian but not Laguerre, and a class of $g(x)$ which does not include $g(x)= \log|x-t|$, this has been proved by Johansson \cite{Jo98}. Very recently \cite{BG11}, for matrix ensembles in which the eigenvalue support is
a single interval, with both endpoints soft edges or a hard edge with parameter $a=0$, the conditions on $g(x)$ have been relaxed to require only that $g(x)$ be analytic in a neighbourhood of
the support. The Gaussian $\beta$-ensemble has the first of these properties, with  $g(x)= \log|x-t|$ for $x$ outside the eigenvalue support has the second
of the properties, and thus then (\ref{ea2}) is a rigorous theorem. But in general the Laguerre ensemble has a hard edge with $a \ne 0$, and so the result of
 \cite{BG11} applies only in the special case $a=0$.

We will proceed under the assumption that (\ref{ea2}) is valid for the linear statistic (\ref{V1}) not only in the Gaussian case, but also
in the general  Laguerre case,
with $x$ outside the interval of support of the eigenvalue density $[a,b]$. Then use of (\ref{ea2}) in (\ref{11.1}) gives
\begin{equation}\label{9.1}
\left\langle \prod_{l=1}^N|x-\lambda_l|^\beta \right\rangle_{{\rm ME}_{\beta,N}(w)}\sim e^{\beta \mu_N(v)}e^{(\beta \sigma(v))^2/2}
\end{equation}
with $v(t)= \log|x-t|$ and appropriate $w$. It thus remains to evaluate $\mu_N(v)$ and $\sigma(v)$. Substitution in (\ref{d.3a}) will then give the sought large $N$ form of the densities. We will treat the Gaussian and Laguerre cases separately.

\section{The Gaussian $\beta$-ensemble}\label{Section3}
\setcounter{equation}{0}
For the weight $w(\lambda)=e^{-M\beta \lambda^2}$ in (\ref{3.a}) with $M/N\rightarrow 1$ as $N\rightarrow \infty$ the leading form of the density $\rho_{(1),N}(t)$ is given by the RHS of (\ref{1.1}) multiplied by $N$. The next order term consists of an oscillatory and a non-oscillatory part \cite{DF06} --- only in the latter contributes to the integral in (\ref{m1}) to the same order. Ignoring then the oscillatory term we read off from Johansson \cite[displayed equation below (2.15)]{Jo98} or more explicitly \cite[eq.~(19), Lemma 2.20]{DE05}, \cite[eq. (1.4)]{FFG06e} that for $|t|\leq 1$
\begin{align}
\rho_{(1),N}(t)\sim& \frac{2(MN)^{1/2}}{\pi}\Big(1-\frac{M}{N}t^2 \Big)^{1/2}
+\Big(\frac{1}{\beta}-\frac{1}{2} \Big)\Big(\frac{1}{2}\delta(t-1)+\frac{1}{2}\delta(t+1)-\frac{1}{\pi}\frac{1}{\sqrt{1-t^2}} \Big)\notag\\
\sim & \frac{2M}{\pi}(1-t^2)^{1/2} 
+\Big(\frac{1}{\beta}-\frac{1}{2} \Big)\Big(\frac{1}{2}\delta(t-1)+\frac{1}{2}\delta(t+1)-\frac{1}{\pi}\frac{1}{\sqrt{1-t^2}}\Big)\notag\\
&\;\;+\frac{N-M}{\pi\sqrt{1-t^2}},\label{w1}
\end{align}
where the second asymptotic equality follows by expanding $M/N$ about $1$.

We seek the value of (\ref{m1}) with $\rho_{(1),N}(t)$ given by (\ref{w1}) and $a(t)=\log(x-t)$. For this we make use of the following integral evaluations.

\begin{proposition}\label{Wx}
Suppose $x>1$. We have
\begin{align}
&\frac{2}{\pi}\int_{-1}^1\log|x-t| \, (1-t^2)^{1/2}dt=x^2-x\sqrt{x^2-1}-\log(2(x-\sqrt{x^2-1}))-\frac{1}{2}, \label{r1}\\
&\frac{1}{\pi}\int_{-1}^1\log|x-t| \, (1-t^2)^{-1/2}dt=\log\Big(\frac{1}{2}(x+\sqrt{x^2-1})\Big).\label{r2}
\end{align}
\end{proposition}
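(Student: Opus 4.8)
The plan is to prove both identities by differentiating in the parameter $x$, which collapses each left-hand side onto a classical Cauchy transform, and then integrating back up while fixing the constant of integration from the behaviour as $x\to\infty$. Since $x>1$ forces $|x-t|=x-t$ throughout $[-1,1]$, there are no absolute-value subtleties and differentiation under the integral sign is legitimate. Write $I(x)$ and $J(x)$ for the left-hand sides of (\ref{r1}) and (\ref{r2}) respectively.

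First I would compute $I'(x)=\frac{2}{\pi}\int_{-1}^1(1-t^2)^{1/2}/(x-t)\,dt$ and $J'(x)=\frac{1}{\pi}\int_{-1}^1(1-t^2)^{-1/2}/(x-t)\,dt$. The former is twice the Stieltjes transform of the Wigner semicircle supported on $(-1,1)$, equal to $2(x-\sqrt{x^2-1})$; the latter is the Stieltjes transform of the arcsine law on $(-1,1)$, equal to $(x^2-1)^{-1/2}$. Both are standard and can be obtained, e.g., by the substitution $t=\cos\theta$ followed by a residue computation, or simply by matching moment expansions in $1/x$. Antidifferentiating, using $\int 2\sqrt{x^2-1}\,dx=x\sqrt{x^2-1}-\log(x+\sqrt{x^2-1})$ and $\int(x^2-1)^{-1/2}\,dx=\log(x+\sqrt{x^2-1})$, gives $I(x)=x^2-x\sqrt{x^2-1}+\log(x+\sqrt{x^2-1})+C$ and $J(x)=\log(x+\sqrt{x^2-1})+C'$.

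To pin down $C$ and $C'$ I would let $x\to\infty$. Since $\frac{2}{\pi}\int_{-1}^1(1-t^2)^{1/2}\,dt=1$ and $\frac{1}{\pi}\int_{-1}^1(1-t^2)^{-1/2}\,dt=1$, writing $\log(x-t)=\log x+\log(1-t/x)$ and expanding gives $I(x)=\log x+O(x^{-2})$ and $J(x)=\log x+O(x^{-2})$. On the right-hand sides, $x\sqrt{x^2-1}=x^2-\tfrac12+O(x^{-2})$ and $\log(x+\sqrt{x^2-1})=\log x+\log 2+O(x^{-2})$, so matching constant terms yields $C=-\log 2-\tfrac12$ and $C'=-\log 2$. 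Finally, since $(x+\sqrt{x^2-1})(x-\sqrt{x^2-1})=1$ one has $\log(x+\sqrt{x^2-1})=-\log(x-\sqrt{x^2-1})$, and rewriting in this form produces exactly (\ref{r1}) and (\ref{r2}).

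A self-contained alternative, which bypasses the asymptotic matching, is the substitution $t=\cos\theta$ together with the Joukowski factorization: with $z:=x-\sqrt{x^2-1}\in(0,1)$ one has $x=\tfrac12(z+z^{-1})$ and hence $x-\cos\theta=\frac{1}{2z}\,|1-ze^{i\theta}|^2$, so $\log(x-\cos\theta)=-\log(2z)+2\log|1-ze^{i\theta}|$. Expanding $\log|1-ze^{i\theta}|=-\sum_{n\ge1}\frac{z^n}{n}\cos n\theta$ and integrating term by term over $[0,\pi]$ against $d\theta$ kills every $n\ge1$ and gives (\ref{r2}) at once; integrating against $\sin^2\theta\,d\theta$ leaves, by orthogonality of $\cos n\theta$ against $\sin^2\theta=\tfrac12(1-\cos 2\theta)$, only the $n=2$ term, which after the factor $2$ contributes $z^2/2$, and since $z^2/2=x^2-\tfrac12-x\sqrt{x^2-1}$ this reproduces (\ref{r1}). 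Neither route involves a genuine obstacle; the only place demanding care is the constant term — the $O(1)$ part of the large-$x$ expansion in the first route, or bookkeeping which Fourier coefficients survive in the second — together with the cosmetic identity $\log(x+\sqrt{x^2-1})=-\log(x-\sqrt{x^2-1})$ needed to match the stated form.
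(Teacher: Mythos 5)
Your primary route is correct and is essentially the paper's own proof: differentiate in $x$, reduce each left-hand side to the standard Cauchy transforms of the semicircle and arcsine densities on $(-1,1)$, and fix the remaining constant from the $\log x+{\rm O}(1/x)$ behaviour as $x\to\infty$ (the paper phrases this as checking equal derivatives plus equal asymptotics, which is the same bookkeeping as your antidifferentiation and constant matching; your evaluations of the transforms, the antiderivatives, and the constants $C=-\log 2-\tfrac12$, $C'=-\log 2$ all check out). Your alternative route, via $t=\cos\theta$ and the factorization $x-\cos\theta=\frac{1}{2z}\,|1-ze^{i\theta}|^{2}$ with $z=x-\sqrt{x^{2}-1}$, is also correct and is genuinely different from the paper's proof of this proposition; it is in fact exactly the factorization the paper introduces afterwards as Lemma~\ref{Lem1} to compute the Fourier coefficients $a_k$ for the variance. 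What that route buys is a fully self-contained evaluation — the expansion $\log|1-ze^{i\theta}|=-\sum_{n\ge1}\frac{z^{n}}{n}\cos n\theta$ and orthogonality give both integrals exactly (only the $n=2$ mode surviving against $\sin^{2}\theta$, yielding $z^{2}/2=x^{2}-\tfrac12-x\sqrt{x^{2}-1}$), with no asymptotic matching at infinity needed — at the modest cost of the change of variables and Fourier bookkeeping.
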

\begin{proof}
It is straightforward to check that for both equations, both sides tend to $\log x +{\rm O}(1/x)$ as $x\rightarrow \infty$. Hence it suffices to show that both sides have the same derivative with respect to $x$, and thus that for $x>1$
\begin{align*}
&\frac{2}{\pi}\int_{-1}^1\frac{(1-t^2)^{1/2}}{x-t}dt=2x(1-(1-1/x^2)^{1/2}),\\
&\frac{1}{\pi}\int^1_{-1}\frac{1}{(1-t^2)^{1/2}}\frac{dt}{x-t}=\frac{1}{(x^2-1)^{1/2}}.
\end{align*}
The first is a well known result in random matrix theory (see e.g. \cite[eq. (1.32)]{Fo10}). Both integral evaluations can be checked by expanding the left hand and right hand sides in powers of $1/x$, and making use of the beta integral on the LHS and the binomial expansion on the RHS.
\end{proof}

Making use of Proposition \ref{Wx} we therefore have
\begin{align}
\mu_N=&M\Big(x^2-x\sqrt{x^2-1}-\log(2(x-\sqrt{x^2-1}))-\frac{1}{2} \Big)\notag\\
&+\Big(\frac{1}{\beta}-\frac{1}{2} \Big)\Big(\frac{1}{2}\log|1-x^2|-\log\Big(\frac{1}{2}(x+\sqrt{x^2-1})\Big) \Big)\notag\\
&+(N-M)\log\Big(\frac{1}{2}(x+\sqrt{x^2-1}) \Big).\label{r1a}
\end{align}
Next we are next faced with the task of evaluating (\ref{m2}) for $a(t)=\log|x-t|$ and $a=-1,b=1$. Making use of the second equality in (\ref{m2}), simple manipulation shows we must evaluate
\begin{equation}\label{Ld}
a_k=\frac{2}{\pi}\int_{0}^\pi \log\Big(1-\frac{\cos \theta}{x} \Big) \cos k\theta \, d\theta
\end{equation}
for $k=1,2,\ldots $ and $x>1$.

\begin{lemma}\label{Lem1}
For $x>1$
\begin{equation}\label{Ls}
\log\Big(1-\frac{\cos \theta}{x} \Big)=\log(1-\nu e^{i\theta})(1-\nu e^{-i\theta})-\log(1+\nu^2),
\end{equation}
where
\begin{equation}\label{Ls1}
\nu=x-(x^2-1)^{1/2}
\end{equation}
and thus in particular $0<\nu<1$.
\end{lemma}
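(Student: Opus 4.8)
The plan is to verify the identity (\ref{Ls}) by a direct algebraic computation, exploiting the fact that $\nu$ defined by (\ref{Ls1}) satisfies a convenient quadratic relation. First I would record that from (\ref{Ls1}) one has $\nu + \nu^{-1} = 2x$, equivalently $\nu^2 - 2x\nu + 1 = 0$, so in particular $x = (\nu + \nu^{-1})/2$ and $(x^2-1)^{1/2} = (\nu^{-1} - \nu)/2$ (the sign being fixed by $x > 1$, which forces $0 < \nu < 1$ as claimed, since $\nu = x - (x^2-1)^{1/2}$ is a decreasing function of $x$ equal to $1$ at $x=1$ and tending to $0$ as $x \to \infty$).

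Next I would expand the right-hand side of (\ref{Ls}). Writing $(1 - \nu e^{i\theta})(1 - \nu e^{-i\theta}) = 1 - \nu(e^{i\theta} + e^{-i\theta}) + \nu^2 = 1 + \nu^2 - 2\nu\cos\theta$, the right-hand side becomes
\begin{equation}
\log\bigl(1 + \nu^2 - 2\nu\cos\theta\bigr) - \log(1+\nu^2) = \log\!\left(1 - \frac{2\nu\cos\theta}{1+\nu^2}\right).
\end{equation}
So it remains only to check that $2\nu/(1+\nu^2) = 1/x$. But $1 + \nu^2 = \nu(\nu^{-1} + \nu) = \nu \cdot 2x$, hence $2\nu/(1+\nu^2) = 2\nu/(2x\nu) = 1/x$, which is exactly what is needed. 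This completes the verification of (\ref{Ls}).

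There is essentially no obstacle here: the lemma is a purely algebraic substitution, and the only thing requiring a word of care is the choice of branch/sign in passing between $x$ and $\nu$ and the assertion $0<\nu<1$, both of which follow immediately from $x>1$. One should also note for later use (in evaluating (\ref{Ld})) that $1 - \nu e^{\pm i\theta}$ has positive real part for $0 < \nu < 1$, so the logarithms on the right of (\ref{Ls}) are unambiguously defined and the identity holds as an equality of real quantities once one takes $\log|1 - \nu e^{i\theta}|^2$ on the right; this is the form in which the lemma will be applied, since it makes the subsequent Fourier-coefficient computation reduce to the standard expansion $\log(1 - \nu e^{i\theta}) = -\sum_{k\geq 1} \nu^k e^{ik\theta}/k$.
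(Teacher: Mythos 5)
Your proposal is correct and follows essentially the same route as the paper: both reduce the right-hand side to $\log\bigl(1-\tfrac{2\nu}{1+\nu^2}\cos\theta\bigr)$ and then match $\tfrac{2\nu}{1+\nu^2}=\tfrac{1}{x}$, the paper by solving this quadratic to obtain (\ref{Ls1}) and you by verifying it directly from $\nu^2-2x\nu+1=0$. Your added remarks on $0<\nu<1$ and the branch of the logarithm are fine but not needed beyond what the paper records.
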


\begin{proof}
The RHS of (\ref{Ls}) is equal to $$
\log\Big(1-\frac{2\nu}{1+\nu^2}\cos\theta \Big).
$$
Equating with the LHS we read off that $$
\frac{1}{x}=\frac{2\nu}{1+\nu^2},
$$
and solving for $\nu$ gives (\ref{Ls1}).
\end{proof}

Substituting (\ref{Ls}) in (\ref{Ld}) shows
\begin{equation}\label{Ls2}
a_k=-\frac{2\nu^k}{k},
\end{equation}
and substituting this in the second equality of (\ref{m2}) we see upon evaluating the sum that
\begin{align}
\sigma^2&=-\frac{2}{\beta}\log(1-\nu^2)\notag\\
&=-\frac{2}{\beta}\log(x^2-1)^{1/2}-\frac{2}{\beta}\log(2(x-\sqrt{x^2-1})),\label{r2a}
\end{align}
where the second equality follows upon making use of (\ref{Ls1}).

Let us now make the choice $M=N+1$. Then it follows from (\ref{9.1}), (\ref{r1a}) and (\ref{r2a}) that
\begin{align}
\left\langle \prod_{l=1}^N|x-\lambda_l|^\beta\right\rangle_{{\rm ME}_{\beta,N}(e^{-(N+1)\beta\lambda^2})}
&\mathop{\sim}_{N\rightarrow \infty}e^{(N+1)\beta(x^2-x\sqrt{x^2-1}-\log(2(x-\sqrt{x^2-1}))-1/2)}\notag\\
&\hspace{1cm}\times\frac{(x^2-1)^{(1-3\beta/2)/2}}{((x+(x^2-1)^{1/2})/2)^{(1-\beta/2)}}.\label{wf}
\end{align}
We remark that with the exponent in the average on the LHS replaced by $2 \alpha$, and $\beta$ in ME${}_{\beta,N}$ set
equal to 2, the asymptotic formula (\ref{ea2}) has been used in \cite{K07} to deduce the corresponding asymptotic form
up to the same order as in (\ref{wf}).
With $\alpha = 1$ the result of  \cite{K07} agrees with the $\beta = 2$ case of (\ref{wf}).

The result (\ref{wf}) substituted in (\ref{d.3a}) and use of (\ref{9ax}) then gives our sought generalization of (\ref{2.1}),
\begin{align}
\sqrt{2N}\rho_{(1),N}^{\rm G}(\sqrt{2N}x)\mathop{\sim}_{N\rightarrow \infty}&e^{-N\beta(x\sqrt{x^2-1}+\log(x-\sqrt{x^2-1}))}\notag\\
&\times \frac{(x^2-1)^{(1-3\beta/2)/2}}{((x+(x^2-1)^{1/2})/2)^{(1-\beta/2)}}\left( \frac{N}{\pi}(N\beta)^{-\beta/2}2^{1-\beta}\Gamma(1+\beta/2)\right)\label{wf1}
\end{align}
valid for $x>1$.
As a check on our working we note that in the case $\beta=2$, corresponding to the GUE, the leading two terms in the large $N$ expansion of the largest eigenvalue distribution $p_N^{{\rm GUE}}(\sqrt{2N}s),$ $s>1$, has been obtained recently \cite[eq. (76) with $\alpha=1$ and $t=\sqrt{2N}s$]{NM11}. But the analogue of (\ref{3.2}) for the GUE tells us that this asymptotic form must be identical to that for $\rho_{(1),N}^{\rm G}(\sqrt{2N}s)|_{\beta=2}$. Indeed (\ref{wf1}) with $\beta=2$ reproduces the result from \cite{NM11}.

In the cases $\beta = 1,2$ and 4 simple closed form expressions for $\rho_{{(1)},N}^{\rm G}(x)$ are known in terms of Hermite polynomials
(see e.g.~\cite[Ch.~5\&6]{Fo10}). The simplest case is $\beta = 2$ for which
\begin{equation}\label{17.1}
\rho_{(1),N}^{\rm G}(x) = {2^{-N} e^{-x^2} \over \sqrt{\pi} (N-1)!} \Big ( H_N'(x) H_{N-1}(x) -  H_{N-1}'(x) H_{N}(x) \Big ),
\end{equation}
while for $\beta = 1$ and $N$ even we have
\begin{equation}\label{17.2}
\rho_{(1),N}^{\rm G}(x) = \rho_{(1),N-1}^{\rm G}(x) \Big |_{\beta = 2} +
{e^{-x^2/2} \over \sqrt{\pi} 2^{(N-1)} (N-2)! }H_{N-1}(x) {1 \over 2}
\int_{-\infty}^\infty {\rm sgn} (x-t) H_{N-2}(t) e^{-t^2/2} \, dt.
\end{equation}
The case $\beta = 4$ is similar to (\ref{17.2}), but to make our point is suffices to restrict attention to $\beta = 1$ and 2.

Numerical evaluation of (\ref{17.1}) and (\ref{17.2}) allows the accuracy of (\ref{wf1}) to be quantified.
Define the ratio
\begin{equation}\label{r1}
r^{\rm G}_{\beta,N}(s) := {\rho_{(1),N}^{\rm G}(\sqrt{2N}s) \over \rho_{(1),N}^{\rm G, asym}(\sqrt{2N}s) }
\end{equation}
where $ \rho_{(1),N}^{\rm G, asym}(\sqrt{2N}s)$ is the asymptotic form (\ref{wf1}). The results for particular $s$ and varying $N$
are given in Table \ref{T1a}.

\begin{table}
\begin{center}
\begin{tabular}{c||c|c|c|c||c|c|c|c|}
&1.2 & 1.4 & 1.6 & 1.8 & 1.2 & 1.4 & 1.6 & 1.8 \\\hline
6 &1.404 & 1.177& 1.109  & 1.078 &1.126&1.074&1.056&1.047 \\
12&1.220 & 1.092 & 1.055 & 1.039 &1.067&1.038&1.028&1.023 \\
18 & 1.152&1.062 & 1.037 & 1.026&1.046&1.025&1.019&1.015\\
24 & 1.116&1.047&1.028&1.019& 1.035 & 1.019 & 1.014 & 1.011\\
30&1.094&1.038&1.022&1.015 & 1.028 & 1.015 & 1.011 & 1.009
\end{tabular}
\caption{\label{T1a} The ratios $r^{\rm G}_{2,N}(s)$ (first four inner columns) and
$r^{\rm G}_{1,N}(s)$ (final four columns) for varying $N$ (rows) and given $s$ (columns).}
\end{center}
\end{table}

\section{The Laguerre $\beta$-ensemble}\label{Section4}
\setcounter{equation}{0}

We now turn our attention to the weight $w(\lambda)=\lambda^{a\beta/2}e^{-2M\beta \lambda}\chi_{\lambda>0}$ in (\ref{3.a}) with $M/N\rightarrow 1$ as $N\rightarrow \infty$. The leading two terms in the large $N$ expansion of the smoothed density for $0<t<1$ is most naturally presented in the variable $t^2$. Thus according to \cite[eqns. (6.21) and (6.22)]{FFG06e} we have
\begin{align}
2t\rho_{(1),N}(t^2)\sim&\frac{4(MN)^{1/2}}{\pi}\Big(1-\frac{M}{N}t^2 \Big)^{1/2}
+\Big(a+\frac{1}{\beta}-\frac{1}{2}\Big)\Big(\frac{1}{\pi\sqrt{1-t^2}}-\frac{1}{2}\delta(t) \Big)\notag\\
&+\Big(\frac{1}{\beta}-\frac{1}{2} \Big)\Big(\frac{1}{2}\delta(t-1)-\frac{1}{\pi\sqrt{1-t^2}} \Big) \notag\\
\sim&\frac{4M}{\pi}(1-t^2)^{1/2}
+\Big(a+\frac{1}{\beta}-\frac{1}{2}\Big)\Big(\frac{1}{\pi\sqrt{1-t^2}}-\frac{1}{2}\delta(t) \Big)\notag\\
&+\Big(\frac{1}{\beta}-\frac{1}{2} \Big)\Big(\frac{1}{2}\delta(t-1)-\frac{1}{\pi \sqrt{1-t^2}} \Big)+\frac{2(N-M)}{\pi\sqrt{1-t^2}}.\label{6.e}
\end{align}
Noting that
$$
\int_0^1\log|x-t|\rho_{(1),N}(t) \, dt=\int_{-1}^1\log|\sqrt{x}-t|(2t\rho_{(1),N}(t^2)) \, dt,
$$
where we define $2t\rho_{(1),N}(t^2)$ for $t<0$ by the requirement that it be even (equivalently replace $\delta(t-1)$ in (\ref{6.e}) by $\delta(t-1)+\delta(t+1)$) we see that the corresponding form of (\ref{m1}) with $a(t)=\log|x-t|$ can be simplified using knowledge of (\ref{r1}) and (\ref{r2}). Thus, with $M=N+1$
\begin{align}
\mu_N=&2M\Big(x-\sqrt{x(x-1)}-\log(2(\sqrt{x}-\sqrt{x-1}))-\frac{1}{2}\Big)\notag \\
&+(a-2)\log\Big(\frac{1}{2}(\sqrt{x}+\sqrt{x-1}) \Big)-\frac{1}{2}(2a+2/\beta-1)\log|x|^{1/2} \notag\\
&+\Big(\frac{1}{\beta}-\frac{1}{2} \Big)\log(x-1)^{1/2}.\label{19.1}
\end{align}

Our remaining task is to evaluate (\ref{m2}) for $a(t)=\log|x-t|$ and $a=0$, $b=1$. Making use of the second equality in (\ref{m2}) shows we must evaluate
$$
a_k=\frac{2}{\pi}\int_{0}^\pi\log\Big(1-\frac{\cos\theta}{2x-1} \Big)\cos k\theta \, d\theta
$$
for $k=1,2,\ldots$ and $x>1$. For this we can use Lemma \ref{Lem1} with $x$ replaced by $2x-1$ in (\ref{Ls1}) and thus
\begin{equation}\label{Ls3}
\nu=2x-1-2x^{1/2}(x-1)^{1/2}.
\end{equation}
In particular it follows that $a_k$ is again evaluated according to (\ref{Ls2}), but with $\nu$ now given by (\ref{Ls3}). Consequently
\begin{align}
\sigma^2&=-\frac{2}{\beta}\log(1-\nu^2)\notag\\
&=-\frac{2}{\beta}\log (x(x-1))^{1/2}-\frac{4}{\beta}\log(2(x^{1/2}-(x-1)^{1/2})). \label{19.2}
\end{align}

The results (\ref{19.1}) and (\ref{19.2}) substituted in (\ref{9.1}) give for the asymptotic form of the $\beta$ moment of the characteristic polynomial for the Laguerre $\beta$-ensemble,
\begin{align}
\left\langle
\prod_{l=1}^N|x-\lambda_l|^\beta\right\rangle_{{\rm ME}_{\beta,N}(\lambda^{a\beta/2}e^{-2(N+1)\beta\lambda})}
\mathop{\sim}_{N\rightarrow \infty} & e^{2(N+1)\beta(x-\sqrt{x(x-1)}-\log(2(\sqrt{x}-\sqrt{x-1})-1/2)}\notag\\
&\times 2^{-a\beta}\frac{(x-1)^{(1-3\beta/2)/2}}{x^{((a+1/2)+1)/2}}(\sqrt{x}+\sqrt{x-1})^{a\beta}.\label{20.1}
\end{align}
We remark that for fixed $N$ and $x$ large the LHS tends to $x^{\beta N}$. It is straightforward to check that this is the $x$ large
behaviour of the RHS, suggesting that (\ref{20.1}) holds uniformly for $x > 1$ (the corresponding Gaussian asymptotic formula (\ref{wf})
also exhibits this property).
Substituting this in (\ref{d.3b}) with $M=N+1$ and making use too of (\ref{9ag}) we obtain for the asympotic form of the density
\begin{align}
4N\rho_{(1),N}^{\rm L}(4Nx)\mathop{\sim}_{N\rightarrow\infty}&e^{-2N\beta(\sqrt{x(x-1)}+\log(\sqrt{x}-\sqrt{x-1}))}\notag\\
&\times \frac{(x-1)^{(1-3\beta/2)/2}}{x^{(\beta/2+1)/2}}(\sqrt{x}+\sqrt{x-1})^{a\beta}\Big( 2^{1-3\beta/2}\Gamma\Big(1+\frac{\beta}{2}\Big)\frac{N}{\pi}(N\beta)^{-\beta/2}\Big).\label{20.2}
\end{align}

In the Laguerre case, for $\beta = 1$, 2 and 4 there are simple closed form expressions for the density in terms of Laguerre polynomials
(see e.g.~\cite[Ch.~5\&6]{Fo10}. The simplest case is $\beta = 2$, for which
$$
\rho_{(1)}^{\rm L}(x) = {\Gamma(N+1) \over \Gamma(N+a)} x^{a}e^{-x} \Big ( L_N^a(x) {d \over dx} L_{N-1}^a(x) -
L_{N-1}^a(x) {d \over dx} L_{N}^a(x) \Big ).
$$
The explicit formulas for  $\beta = 1$ and 4 has a structure similar to (\ref{17.2}). We have used these formulas (in the cases $\beta = 2$ and 1) to
check that that indeed for given $a$, a fixed $x$ and increasing $N$ the ratio of the LHS and RHS in (\ref{20.2}) tends to unity.

\section{Distribution of the largest eigenvalue}\label{Section5}
\setcounter{equation}{0}

We have noted in the case of the GOE that the large deviation form of the right tail of the PDF for the distribution of the largest eigenvalue is, up to exponentially small corrections, equal to the corresponding large deviation form of the density (recall eq. (\ref{3.2})). The given derivation of this formula applies equally as well to the right tail of the largest eigenvalue PDF for the Gaussian and Laguerre $\beta$-ensembles. Thus with $s>1$ we have
\begin{align}
\label{F1} p_N^{\rm G} (\sqrt{2N} s) \sim \rho_{(1), N}^{\rm G} (\sqrt{2N} s),
\end{align}
where the RHS has the explicit form given by (\ref{wf1}), and
\begin{align}
\label{F2} p_{N}^{\rm L} (4N s) \sim \rho_{(1), N}^{\rm L} (4N s),
\end{align}
where the RHS has the explicit form given by (\ref{20.2}).

In \cite{MV09} the leading asymptotic expressions
\begin{align}
\label{51a} p_N^{\rm G} (\sqrt{2N} s) &\sim \exp \left(-\beta N \psi_+^{\rm G} (\sqrt{2}(s-1))\right),\\
\label{51b} p_N^{\rm L} (4N s) &\sim \exp \left(-\beta N \psi_+^{\rm L }(4(s-1))\right),
\end{align}
valid for $s>1$ were derived.
Here, with $G(z):= {}_3F_2 (\{ 1,1,3/2\},\{ 2,3\}; z)$ a particular hypergeometric function, the rate functions $\psi_+^{\rm G}$ and $\psi_+^{\rm L}$ are specified by
\begin{align}
\label{52a} \psi_{+}^{\rm G} (z-\sqrt{2}) &:= \frac{z^2-1} {2} - \log (z\sqrt{2}) +\frac{1}{4z^2} \; G\left( \frac{2} {z^2}\right), \\
\label{52b} \psi_+^{\rm L} (z-4) &:= \frac{z-2}{2} -\log z +\frac{1}{z} \; G\left(\frac{4} {z} \right).
\end{align}
Subsequently in \cite[eq. (15)]{NM11} the first of these was written in the simplified form
\begin{align}
\label{53} \psi_+^{\rm G} (z-\sqrt{2}) = \frac{z\sqrt{z^2- 2}} {2} +\log \left( \frac{z-\sqrt{z^2-2}} {\sqrt{2}}\right).
\end{align}
Substituted in (\ref{51a}) this implies
\begin{align}
\label{54} p_N^{\rm G} (\sqrt{2N} s) \sim e^{-N\beta (s\sqrt{s^2-1}) +\log (s-\sqrt{s^2-1}))},
\end{align}
which indeed is the leading asymptotic behaviour as predicted by (\ref{F1}) and (\ref{wf1}). It is remarked in \cite{NM11} that (\ref{54}) in the GOE case $\beta=1$ was first established in an earlier work \cite{BDG01}.

Note that comparison of (\ref{52a}) and (\ref{53}) implies an explicit form for $G(2/z^2)$. This substituted in (\ref{52b}) implies
\begin{align}
\label{55} \psi_+^{\rm L} (z-4) = \frac{1}{2} \sqrt{z(z-4)} + 2\log \frac{1}{2} \left( \sqrt{z} - \sqrt{z-4}\right).
\end{align}
Now substituting (\ref{55}) in (\ref{51b}) we obtain the explicit large derivation formula
\begin{align}
\label{56} p_N^{\rm L} (4Ns) \sim e^{-2N\beta (\sqrt{s(s-1)} + \log (\sqrt{s}-\sqrt{s-1}))}.
\end{align}
And as in the Gaussian case, this indeed agrees with the leading term as implied by our corresponding formulas (\ref{F2}) and (\ref{20.2}).

The large deviation  tails of the PDF for the distribution of the largest eigenvalue in the $\beta=2, a=0$ Laguerre ensemble have been the subject of a recent experimental study \cite{FPNFD11}. We note that this particular Laguerre ensemble gives the eigenvalue PDF for the matrix product $X^{\dag} X$, where $X$ is an $N\times N$ standard complex Gaussian matrix (see \cite[\S 3.2]{Fo10}). The experimental study relates to the probability distribution of the combined output power from a certain coupled array of high gain lasers. It turns out that the round trip propagation matrix $M$ determining the evolution of the complex electric field in this setting can be decomposed in the form $X^{\dag} X$ for $X$ a square random matrix with near Gaussian complex entries. Moreover, the coupled lasers will lase at the mode with minimal losses, which corresponds to the eigenvector of $M$ with largest eigenvalue, and the output power is proportional to this eigenvalue. Significantly, very large data sets can be generated (order half a million measurements) allowing for the large deviation regime of the PDF to be probed.

In regards to the right tail, the experimental data was compared against the leading form (\ref{56}), modified by multiplication by a numerical factor $c_2 = 0.0063$ chosen on the basis of the modification then giving a better fit of results obtained from numerical simulation. But our result (\ref{F1}) extends (\ref{56}) to give the algebraic and constant terms. Explicitly, for $\beta=2, a=0$ we have
\begin{align}
\label{57} p_{N}^{\rm L} (4Ns) \sim e^{-4N (\sqrt{s(s-1)} + \log (\sqrt{s} -\sqrt{s-1}))} \frac{1}{s(s-1)} \left( \frac{1} {32\pi N}\right).
\end{align}
It thus remains to see if the right tail of the experimentally determined PDF can be resolved to the extent that this correction term can be exhibited.

The formulas (\ref{F1}) and (\ref{F2}) are valid for $N\to \infty$ with $s>1$ fixed. This is to be contrasted to the soft edge scaling limit in which $N\to \infty$ and simultaneously $s\to 1$ according to the scalings \cite{Fo93a}
\begin{align}
\label{S1} \sqrt{2N} s &= \sqrt{2N} + \frac{X} {\sqrt{2} N^{1/6}},\\
\label{S2} 4Ns &= 4N + 2(2N)^{1/3} X
\end{align}
in the Gaussian and Laguerre cases respectively. For $\beta$ even the explicit soft edge scaling form of the density $\rho_{(1)}^{\rm soft} (X)$ --- which as a consequence of the principle of universality \cite{Ku11} is independent of the particular case, Gaussian or Laguerre --- has been calculated in \cite[Corollary 12]{DF06}. Of interest to us is the corresponding right tail asymptotics deduced from this exact expression \cite[Corollary 14]{DF06} (corrected by the removal of a spurious factor of 1/2 on the RHS)
\begin{align}
\label{S3} \rho_{(1)}^{\rm soft} (X) \mathop{\sim}\limits_{X\to\infty} \frac{1}{\pi} \frac{\Gamma (1+\beta/2)}{(4\beta)^{\beta/2}} \frac{e^{-2\beta X^{3/2}/3}} {X^{3\beta/4 -1/2}}.
\end{align}
In the context of largest eigenvalue distributions, the argument leading to (\ref{3.2}) tells us that the PDF, $p^{\rm soft}(X)$ say, of the soft edge scaled largest eigenvalue of the Gaussian and Laguerre $\beta$-ensembles must have the same leading large $X$ asymptotic form as $\rho_{(1)}^{\rm soft}(X)$. Thus for $\beta$ even (at least)
\begin{align}
\label{S4} p^{\rm soft} (X) \mathop{\sim}\limits_{X\to\infty} \frac{1}{\pi} \frac{\Gamma (1+\beta/2)}{(4\beta)^{\beta/2}} \frac{e^{-2\beta X^{3/2}/3}} {X^{3\beta/4 -1/2}}.
\end{align}
The cases $\beta = 1$ and 2 of this expression, deduced using the exact Painlev\'e transcendent evaluations of $p^{\rm soft} (X)$ in these cases
(see e.g.~\cite[Ch.~9]{Fo10}) were reported in \cite{VBAB01}.
The equivalent asymptotic expression
\begin{align}
\label{S5} \int_X^{\infty} p^{\rm soft} (y) \, dy \mathop{\sim}\limits_{X\to\infty} \frac{1}{\pi \beta} \frac{\Gamma (1+\beta/2)}{(4\beta)^{\beta/2}} \frac{e^{-2\beta X^{3/2}/3}} {X^{3\beta/4}},
\end{align}
up to the explicit form of the constant but applying for general $\beta>0$, has recently been derived in \cite{DV11}, using the stochastic differential equation characterization.

It is possible to use our explicit large deviation forms (\ref{wf1}) and (\ref{20.2}) for $\rho_{(1),N}^{\rm G} (\sqrt{2N}s)$ and $\rho_{(1),N}^{\rm L} (4Ns)$ to reclaim (\ref{S3}). Thus as done in \cite{MV09} for the leading asymptotic forms (\ref{54}) and (\ref{56}), we replace $s$ by (\ref{S1}) (Gaussian case) and (\ref{S2}) (Laguerre case). Taking the limit $N\to \infty$ then reclaims the RHS of (\ref{S3}). In fact such correction formulas to the $X\to\infty$ form of $\rho_{(1)}^{\rm soft} (X)$ should hold for the complete asymptotic series of $\rho_{(1),N}^{\rm G} (\sqrt{2N} s)$,  $\rho_{(1),N}^{\rm L} (4N s)$ upon this double scaling.

The results of our study also have consequences with regard to the large deviation right tail form of the probability $E_{N, \beta} (k; (s,\infty))$ that there are precisely $k$ eigenvalues in the interval $(s, \infty)$ of the Gaussian and Laguerre $\beta$-ensembles. Now analogous to (\ref{3.1})
\begin{align}
\nonumber E_{N, \beta} (k; (s,\infty)) = &\int_s^{\infty} dt_1 \cdots \int_s^{\infty} dt_k  \, \rho_{(k)}(t_1,\dots,t_k)\\
\nonumber & - {k+1 \choose 1} \int_s^{\infty} dt_1 \cdots \int_s^{\infty} dt_{k+1} \, \rho_{(k+1)}(t_1,\dots,t_{k+1}) +\dots
\end{align}
For the Gaussian case, changing variables as in (\ref{3.1}) and making use of the analogue of (\ref{3.1e}) shows that for $s>1$
\begin{align}
\nonumber E_{N,\beta}^{\rm G} (k; (\sqrt{2N} s,\infty)) \sim (\sqrt{2N})^k \left(\int_s^{\infty} \rho_{(1),N}^G (\sqrt{2N} t) \, dt\right)^k
\end{align}
with $\sqrt{2N} \rho_{(1),N}^{\rm G} (\sqrt{2N} t)$ in turn given by (\ref{wf1}). Similarly, in the Laguerre case, we have that for $s>1$
\begin{align}
\nonumber E_{N,\beta}^{\rm L} (k; (4Ns,\infty)) \sim (4N)^k \left( \int_s^{\infty} \rho_{(1),N}^L (4N t) dt \right)^k
\end{align}
with $4N\rho_{(1),N}^{\rm L} (4Nt)$ given by (\ref{20.2}). The soft edge scaling limit of either of these reads
\begin{equation}\label{As}
E_\beta^{\rm soft}(k;(s,\infty)) \mathop{\sim}\limits_{s \to \infty} \Big ( \int_s^\infty \rho_{(1)}^{\rm soft}(t) \, dt \Big )^k.
\end{equation}
This asymptotic form compliments that for $E_\beta^{\rm soft}(k;(s,\infty))$ with $s \to - \infty$ recently derived in \cite{FW11}.

\subsection*{Acknowledgements}
The participation of Chris Ormerod in the early stages of this project, and the help
of Wendy Baratta and Anthony Mays in the preparation of the manuscript, is
acknowledged. This work was supported by the Australian Research Council.

\appendix
\section{Appendix}
\setcounter{equation}{0}

Here the results (\ref{wf}) and (\ref{20.1}) for $\beta$ even will be derived by applying the steepest descent/ stationary phase method of asymptotic analysis to duality formulas giving the averages as $\beta$-dimensional integrals.

\subsection{Gaussian case}

According to \cite[(13.162)]{Fo10}
\begin{align}
\label{A.1} \left\langle \prod_{j=1}^N (c-\sqrt{\alpha} \: y_j)^n \right\rangle_{{\rm ME}_{2/\alpha,N}(e^{-y^2})}= \left\langle \prod_{j=1}^n (c-iy)^N \right\rangle_{{\rm ME}_{2\alpha,n}(e^{-y^2})}.
\end{align}
Let $\beta$ be even and write $1/\alpha = \beta/2$, $n=\beta$. After scaling the variables (\ref{A.1}) then reads
\begin{align}
\label{A.2} \left\langle \prod_{j=1}^N (x-y_j)^{\beta} \right\rangle_{{\rm ME}_{\beta,N}(e^{-\beta My^2})} = \left\langle \prod_{j=1}^\beta (x-iy_j)^{N} \right\rangle_{{\rm ME}_{4/\beta,\beta}(e^{-2 My^2})}.
\end{align}
In words (\ref{A.2}) 
tells us that the $\beta$ moment of the characteristic polynomial in ${\rm ME}_{\beta,N} (e^{-\beta M y^2})$ is proportional to the $N$-th moment of the characteristic polynomial at $-ix$ in ${\rm ME}_{ 4/\beta,\beta} (e^{-2My^2})$. The latter is well suited to an $N\to \infty$ asymptotic analysis.

As a multidimensional integral the RHS of (\ref{A.2}) reads
\begin{align}
\nonumber &\left\langle \prod_{j=1}^N (x-iy_j)^N \right\rangle_{{\rm ME}_{4/\beta,\beta}(e^{-2My^2})} = \frac{1}{C_{4/\beta,\beta} [e^{-2My^2}]}\\
\label{A.3}&   \qquad  \times \int_{-\infty}^{\infty} dy_1 \cdot\cdot\cdot \int_{-\infty}^{\infty} dy_{\beta} \, e^{-2(M-N)\sum_{j=1}^{\beta} y_j^2} e^{N\sum_{j=1}^{\beta} f(x,y_j)} \prod_{1\leq < k\leq \beta} |y_j - y_k|^{4/\beta},
\end{align}
where
\begin{align}
f(x,y):= -2y^2 +\log (x-iy).
\end{align}
For $N\to \infty$ this can be analyzed using the method of stationary phase

To apply this method, we note
\begin{align}
\label{f1} \frac{\partial f}{\partial y} &= -4y - \frac{i}{x-iy},\\
\label{f2} \frac{\partial^2 f}{\partial y^2} &= -4 + \frac{1}{(x-iy)^2}.
\end{align}
The first of these implies that for $x>1$ the stationary points of $f$ occur at
\begin{align}\label{f3}
y_{\pm} = -\frac{i}{2} (x \pm (x^2 -1)^{1/2}).
\end{align}
We would like to deform the contours of integration from the real line as in (\ref{A.3}) to pass through one of these points.
The details of how to rewrite (\ref{A.3}) so the integrand is an analytic function is given in \cite[Lemma 1]{DF06}. Noting from (\ref{f2}) that $\partial^2 f/\partial y^2\big|_{y=y_{-}}< 0$ suggests we translate each contour parallel to the real axis to pass through $y_{-}$ (the other choice $y=y_{+}$ is such that $\partial^2 f/\partial y^2 >0$, which would mean deforming the contour to run parallel to the imaginary axis in the complex plane, causing convergence problems at infinity). Furthermore, we expand 
\begin{align*}
\nonumber f(x,y_j) = f(x,y_{-}) - \frac{1}{2}\alpha_x (y_{-}-y_j)^2 + \cdot\cdot\cdot,
\end{align*}
where
\begin{align}
\label{f4} f(x,y_{-}) &= x^2 - x(x^2-1)^{1/2}-\frac{1}{2} +\log \left( \frac{x+(x^2-1)^{1/2}}{2}\right),\\
\label{f5} \alpha_x &= 8(x^2-1)^{1/2} (x-(x^2-1)^{1/2}).
\end{align}
Substituting in (\ref{A.3}) and changing variables shows, after setting $M=N+1$
\begin{align}
\left\langle \prod_{j=1}^N (x-iy_j)^N \right\rangle_{{\rm ME}_{\beta,4/\beta} (e^{-2(N+1)y^2})} \mathop{\sim}\limits_{N\to\infty} e^{-2\beta y^2_{-} +N\beta f(x,y_{-})} \left( \sqrt{\frac{4} {\alpha_x}}\right)^{3\beta -2}.
\end{align}
Substituting (\ref{f3})--(\ref{f5}) reclaims (\ref{wf}).

\subsection{Laguerre case}
\setcounter{equation}{0}

The analogue of (\ref{A.2}) in the Laguerre case is the identity \cite[eq. (13.44), Exercises 13.1, q.4]{Fo10}
\begin{align}
\nonumber &\frac{C_{N,\beta}\left[ x^{\beta a/2} e^{-\beta x/2}\right] (4M)^{Nm}} {C_{N,\beta} [x^{\beta a/2 +m} e^{-\beta x/2}]} \left\langle \prod_{j=1}^N (x_j-t)^m \right\rangle_{{\rm ME}_{N,\beta} (x^{\beta a/2} e^{-2M \beta x})}\\
\nonumber &\qquad =\frac{1}{M_m (\hat{a},N,2/\beta)} \oint \frac{dz_1}{2\pi i} \cdot\cdot\cdot \oint \frac{dz_m}{2\pi i} \prod_{l=1}^m e^{-4Mtz_l} z_l^{-N-1-(2/\beta) (m-1)} (1+z_l)^{\hat{a}+N}\\
\label{L1} &\qquad \quad \times \prod_{1\leq < k \leq m} (z_k - z_j)^{4/\beta},
\end{align}
where $\hat{a} := a-1+2/\beta$ and
\begin{align}
\label{L1a} M_m(\alpha, \beta, \gamma) = \prod_{j=0}^{m-1} \frac{\Gamma (1+\alpha +\beta +j\gamma) \Gamma (1+(j+1)\gamma)} {\Gamma (1+\alpha +j\gamma) \Gamma (1+\beta + j\gamma) \Gamma (1+\gamma)}.
\end{align}
With $m=\beta$ ($\beta$ even) this expresses the $\beta$ moment as a $\beta$-dimensional integral, and furthermore the latter is well suited to asymptotic analysis. But before doing so, we apply Stirling's formula to the known gamma function evaluation of the normalization on the LHS of (\ref{L1}) as used in the derivation of (\ref{9ag}) to obtain their asymptotic form. This allows us to replace the LHS of (\ref{L1}) by
\begin{align}
\label{L2f} \frac{e^{M\beta} 2^{2N\beta} \Gamma (a\beta/2 +1) \Gamma ((a+1)\beta/2)}{(\pi N \beta) \left( N\beta /2\right)^{(2a+1)\beta/2}} \left\langle \prod_{j=1}^N (x_j-t)^\beta \right\rangle_{{\rm ME}_{N,\beta} (x^{\beta a/2} e^{-2M \beta x})}.
\end{align}
We also simplify the prefactor on the RHS. Thus it follows  from (\ref{L1a}), the duplication formula for the gamma function
\begin{align}
\label{L2x} \prod_{j=0}^{\beta/2-1} \Gamma (z+ 2j/\beta) = (2\pi)^{(\beta/2 -1)/2} (\beta/2)^{1/2 -z\beta/2} \Gamma (\beta z/2)
\end{align}
and Stirling's formula that for $N\to\infty$
\begin{align}
\label{L3g} \frac{1}{M_{\beta} (\hat{a},N,2/\beta)} \sim \left( \Gamma (1+2/\beta)\right)^{\beta} (\beta/2)^2 (\beta N /2)^{\beta (1-a)-2} \frac{\Gamma (\beta a/2+1) \Gamma( \beta(a+1)/2 +1)}{\Gamma (\beta/2 +1) \Gamma (\beta +1)}.
\end{align}

We now turn our attention to the integral on the RHS of (\ref{L1}), and for convenience specialize to the case $M=N+1$, although our final formula will be valid for any $M-N$ fixed. This integral can be written
\begin{align}
\label{L3h} \oint \frac{dz_1}{2\pi i} \cdot\cdot\cdot \oint \frac{dz_{\beta}}{2\pi i} \prod_{l=1}^{\beta} e^{Mg(t,z_l)} (1+z_l)^{a-2+2/\beta} z_l^{-(2/\beta)(\beta-1)} \prod_{1\leq j< k\leq \beta} (z_k-z_l)^{4/\beta},
\end{align}
where
\begin{align}
g(t,z):= -4tz- \log z +\log (1+z).
\end{align}

In preparation for applying the method of stationary phase, we note that
\begin{align}
\nonumber \frac{\partial g}{\partial z} &=-4t -\frac{1}{z} +\frac{1}{1+z}\\
\nonumber \frac{\partial^2 g}{\partial z^2} &= \frac{1}{z^2} - \frac{1}{(1+z)^2}.
\end{align}
Thus there are stationary points at
\begin{align}
\nonumber z_{\pm} = \frac{1}{2} (-1\pm (1-1/t)^{1/2})
\end{align}
and at these points
\begin{align}
\label{L4} \frac{\partial^2 g}{\partial z^2} &= \pm \gamma_t, \qquad \gamma_t := (4t)^2 (1-1/t)^{1/2}.
\end{align}
Choosing the positive sign allows the contours (which must all encircle the origin) to be deformed to pass through $z_{+}$ parallel to the imaginary axis; the choice $z_{-}$ would require deforming the contour along the (negative) real axis, which is a branch cut of the logarithm function. Expanding about $z_{+}$ we have to second order
\begin{align}
\nonumber g(t,z_j) = g(t,z_{+}) -\frac{1}{2} \gamma_t \: x_j^2, \qquad z_j-z_{+} =ix_j,
\end{align}
with
\begin{align}
\nonumber g(t,z_{+}) = 2t(1- (1-1/t)^{1/2}) - 2\log (\sqrt{t}-\sqrt{t-1})
\end{align}
and $\gamma_t$ as in (\ref{L4}). Substituting in (\ref{L3h}) and making a further change of variables $\sqrt{\gamma_t} x_j = y_j$ shows that for $N\to\infty$ (\ref{L3h}) has the asymptotic form
\begin{align}
\nonumber &\left( \frac{1}{2\pi} \right)^{\beta} C_{\beta, 4/\beta} [e^{-x^2/2}] (1+z_{+})^{\beta (a-2+2/\beta)} z_{+}^{-2(\beta-1)} (N\gamma_t)^{(2-3\beta)/2}\\
\label{L5} & \quad = \frac{\Gamma (1+\beta/2) \Gamma (1+\beta)} {2\pi (\Gamma (1+ 2/\beta))^{\beta}} (\beta/2)^{1-(\beta/2)(3+4/\beta)} (1+z_{+})^{\beta (a-2+2/\beta)} z_{+}^{-2(\beta-1)} (N\gamma_t)^{(2-3\beta)/2},
\end{align}
where the equality follows upon making use of the explicit gamma function evaluation of $C_{\beta,4/\beta} [e^{-x^2/2}]$ \cite[eq. (4.140)]{Fo10} and the duplication formula (\ref{L2x}).

Multiplying (\ref{L5}) and (\ref{L3g}), equating to (\ref{L2f}) and simplifying reclaims (\ref{20.1}).


\providecommand{\bysame}{\leavevmode\hbox to3em{\hrulefill}\thinspace}
\providecommand{\MR}{\relax\ifhmode\unskip\space\fi MR }
\providecommand{\MRhref}[2]{%
  \href{http://www.ams.org/mathscinet-getitem?mr=#1}{#2}
}
\providecommand{\href}[2]{#2}

\end{document}